\documentclass{article}
\usepackage{amssymb,amsthm,hyperref,graphicx,xspace,hyperref,microtype,wrapfig,fullpage}
\hypersetup{colorlinks=false, pdfborder={0 0 0}}
\usepackage[numbers,sort&compress]{natbib}


\newtheorem{theorem}{Theorem}[section]
\newtheorem{lemma}[theorem]{Lemma}
\newtheorem{corollary}[theorem]{Corollary}
\newtheorem{claim}{Claim}
    \theoremstyle{definition}
    \newtheorem{definition}[theorem]{Definition}
    \theoremstyle{remark}
    \newtheorem{remark}{Remark}

\newcommand{\e}{\emph}
\renewcommand{\cal}[1]{\ensuremath{\mathcal{#1}}\xspace}
\newcommand{\bb}[1]{\ensuremath{\mathbb{#1}}\xspace}
\renewcommand{\-}{\textrm{-}}

\newcommand{\m}[2][1]{\ensuremath{{<{#2}>}^{({#1})}}\xspace}
\renewcommand{\P}[1][1]{\ensuremath{P^{#1}}\xspace}
\newcommand{\bd}[1]{\ensuremath{\mathrm{bd}{#1}}\xspace}

\renewcommand{\l}{\ensuremath{L}\xspace}
\newcommand{\E}{\ensuremath{\cal{E}}\xspace}

\newcommand{\G}{\ensuremath{\cal{G}}\xspace}
\newcommand{\mc}{\ensuremath{m_{\cal{C}}}\xspace}

\newcommand\wout{\ensuremath{\cal{W}_{out}}\xspace}
\newcommand\win{\ensuremath{\cal{W}_{in}}\xspace}
\newcommand\w{\ensuremath{\cal{W}}\xspace}


\title{\bfseries Simple Wriggling is Hard\\ unless You Are a Fat Hippo\thanks{A shorter version of this paper is to be presented at the Fifth International Conference on Fun with Algorithms~\cite{fun}.}
}



\author{Irina Kostitsyna\thanks{CS Dept, Stony Brook University, Stony Brook, NY 11794, USA. 
} \and Valentin Polishchuk\thanks{Helsinki Institute for Information Technology,
 CS Dept, University of Helsinki, FI-00014, Finland. 
 }}
\date{}

\begin{document}\maketitle\begin{abstract}
We prove that it is NP-hard to decide whether two points in a polygonal domain with holes can be connected by a wire. This implies that finding any approximation to the shortest path for a long snake amidst polygonal obstacles is NP-hard. On the positive side, we show that snake's problem is "length-tractable": if the snake is "fat", i.e., its length/width ratio is small, the shortest path can be computed in polynomial time.
\end{abstract}
\section{Introduction}
\begin{wrapfigure}{r}{.5\columnwidth}\centering
\includegraphics[width=.5\columnwidth]{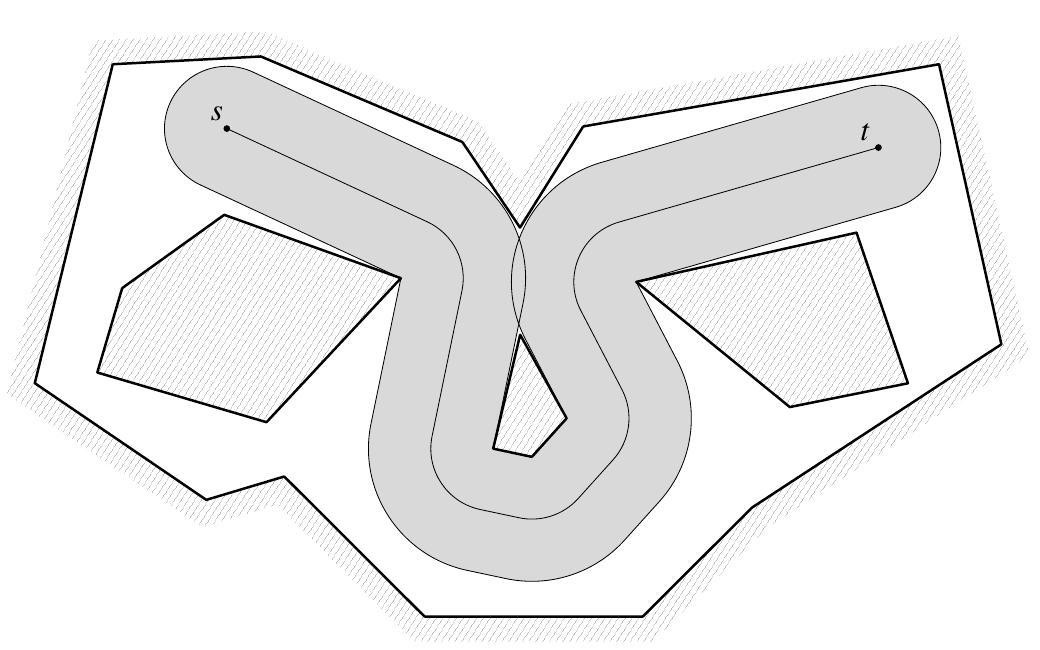}
\caption{Only a self-overlapping thick path exists.}\label{self}
\end{wrapfigure}

The most basic problem in VLSI and printed circuit board design is to connect two given points, $s$ and $t$, by a shortest "thick" path avoiding a set of polygonal obstacles in the plane. The quarter-of-a-century-old approach to the problem is to inflate the obstacles by half the path width, and search for the shortest $s\-t$ path amidst the inflated obstacles \cite{chew}. The found path, when inflated, is the shortest thick $s\-t$ path.

It went almost unnoticed that the thick path built by the above procedure may self-overlap (Fig.~\ref{self}): apart from our recent work on thick paths \cite{arkin10maximum}, we only found one mention of the possibility of the overlap --- Fig.~4 in \cite{hu}. (In a different context, Bereg and Kirkpatrick \cite[Fig.~2]{corridors} also noted that Minkowski sum of a disk and a path may be not simply-connected.) When the path represents a thick \e{wire} connecting terminals on a VLSI chip or on a circuit board, self-overlap is undesirable as the wire must retain its width throughout. Thus, the objective in the basic wire routing problem should be to find the shortest \e{non-selfoverlapping} thick path.

The problem shows up in other places as well. For instance, one may be interested in the optimal conveyor belt design: the belt is a non-selfoverlapping thick path. Our particular motivation comes from air traffic management where thick paths represent lanes for air traffic. Lane thickness equals to the minimum lateral separation standard, so that aircraft following different lanes stay sufficiently far apart to allow for errors in positioning and navigation. If an airlane self-overlaps, two aircraft following the lane may come too close to each other; thus it is desirable to find lanes without self-overlaps.
\subsection{Our contributions}
We prove a surprisingly strong {\textbf{\textit{negative result}}} (Section~\ref{hard}): it is NP-hard even to \e{decide} whether there exists (possibly, arbitrarily long) $s\-t$ wire; this implies that no approximation to the shortest wire can be found in polynomial time (unless P=NP).
\paragraph{Short Snakes} Our intractability result means that in general it is NP-hard for a snake to wriggle its way amidst polygonal obstacles (assuming the snake is uncomfortable with squeezing itself). The good news for snakes is that in our hardness proof the sought wire is considerably long; i.e., the hardness of path finding applies only to \e{long} snakes. Our {\textbf{\textit{positive result}}} (Section~\ref{short}) is that for a bounded-length snake, the shortest path can be found in polynomial time (assuming real RAM and the ability to solve constant-size differential equations in constant time) by a Dijkstra-like traversal of the domain.
\subsection{Related work}
In VLSI numerous extensions and generalizations of the basic problem were considered. These include routing multiple paths, on several levels, and with different constraints and objectives. It is impossible to survey all literature on the subject; we will only mention the books \cite{vlsiBook,maley}.

In robotics thick paths were studied as routes for a circular robot. In this context, path self-overlap poses no problem as even a self-overlapping path may be traversed by the robot; that is, in contrast to VLSI, robotics research should not care about finding non-selfoverlapping paths. In \cite{chew}, Chew gave an efficient algorithm for finding a shortest thick path in a polygonal domain. In a sense, our algorithm for shortest path for a short snake (Section~\ref{short}) may be viewed as an extension of Chew's.

Motion planning for an object with few degrees of freedom may be approached with the \e{cell decomposition} techniques \cite{latombe,lavalle}. Closest to our bounded-length snake problem is the work on path planning for a \e{segment} (rod) \cite{rod,rod3d,rodCCCG,rodSharir}. Short snakes are also relevant to more recent applications of motor protein motion \cite{protein,nature}.

\section{Snake Anatomy and Physiology}
In this section we introduce the notation and formulate our problem.

Let $P$ be an $n$-vertex polygonal domain with obstacles. For a planar set $\cal{S}$ let \bd{\cal{S}} denote the boundary of $\cal{S}$, and for $r>0$ let \m[r]{\cal{S}} denote the Minkowski sum of \cal{S} with the radius-$r$ open disk centered at the origin. Let $\P[r]=P\setminus\m[r]{\bd{P}}$ be $P$ offset by $r$ inside. The boundary of \P[r] consists of straight-line segments and arcs of circles of radius $r$ centered on vertices of $P$. We call such (maximal) arcs \e{$r$-slides}.

Let $\pi$ be a path within \P; let $|\pi|$ denote its length. A {\em thick path} $\Pi$ is the Minkowski sum $\Pi=\m{\pi}$. The path $\pi$ is called the \e{reference path} of $\Pi$; the {\em length} of $\Pi$ is $|\pi|$.

A \e{snake} is a non-selfoverlapping thick path, i.e., a path which is a simply-connected region of the plane. The reference path of the snake is its \e{spine} (Fig.~\ref{anatomy}). One of the endpoints of the spine is the snake's \e{mouth} $m$. The snake is a ``rope'' that ``pulls itself by the head'': imagine that there are little legs (or a wheel, for a toy snake) located at $m$, by means of which the snake moves. The friction between the snake's body and the ground is high: any point $p$ of the spine will move only when the path from the mouth to $p$ is a ''pulled-taut string``, i.e., is a \e{locally shortest} path. That is, the snake always stays pulled taut against the obstacles (or itself).
\begin{figure}\centering
\includegraphics[width=.8\columnwidth]{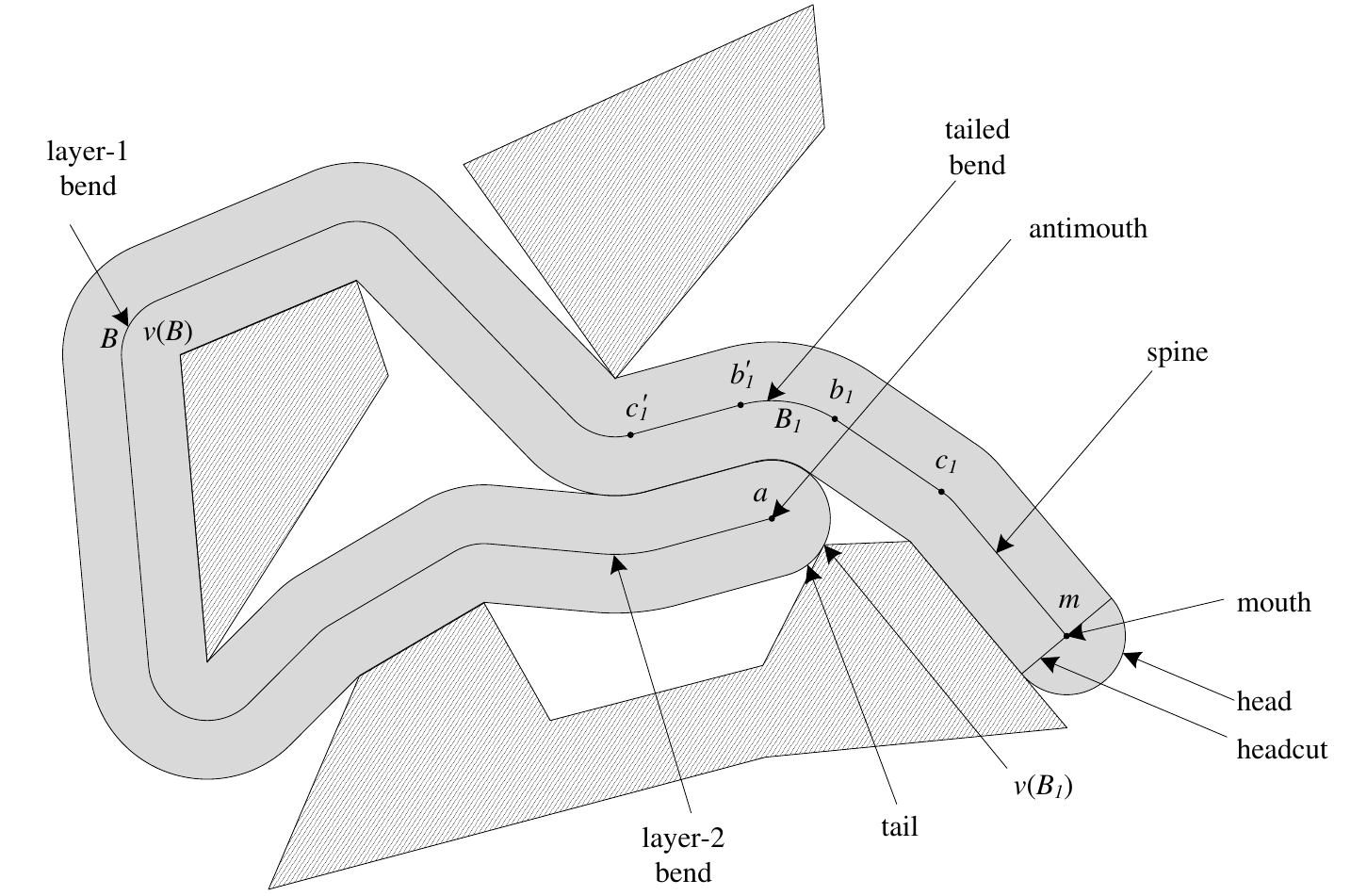}
\caption{Spine, mouth, head, headcut, antimouth, tail, bends. A layer-2 tailed bend $B_1$ does not have a point connected to $v(B_1)$ with a length-2 segment fully lying within the snake.}\label{anatomy}
\end{figure}

The input to our problem is the domain $P$, two points $s,f\in\P$ -- the ``start'' and the ``food'', a number $\l>0$ -- the length of the snake, and the initial direction of the snake at $s$. (Assume w.l.o.g.\ that $s$ and $f$ are at distance 1 from some vertex of $P$.). The goal is to find a path for the snake such that the snake's mouth starts at $s$ and ends at $f$; the constraints are that the snake remains pulled taut and non-selfoverlapping throughout the motion. The objective is to minimize the distance traveled by the mouth, or equivalently, assuming constant speed of motion, the time until the snake reaches the food.
\begin{remark}
To simplify the formulation, we did not specify the initial configuration of the snake; a pedantic view could be to assume that the snake slithers in from a Riemann sheet glued along the diameter of \m{s} that is perpendicular to the initial direction of the snake's motion. Our results, both positive and negative, remain valid even if the initial configuration of the snake is part of the input.
\end{remark}
\begin{remark}
It is not true that all points of the spine necessarily follow the same path.
\end{remark}
\begin{remark}
Whoever guesses that the above model of a snake was developed just for FUN, is right. Nevertheless, the proposed problem formulation may be relevant also in more serious circumstances. It models, e.g., the path of a rope being pulled by its frontpoint through a polygonal domain. If it is a fire hose or a tube delivering life-saving medicine \cite{needleWAFR,needleCook}, minimizing the time to reach a certain point seems like a natural objective (more important than, say, the work spent on pulling the tube).
For another application, consider a chain of robots moving amidst obstacles. Each robot, except for the leader of the chain, has very simple program of following its predecessor -- just keeping the distance to it. Then the robots form (approximately) a pulled taut thick string.
\end{remark}
\begin{remark}
It is possible to come up with problem instances in which no path for a pulled-taut snake exists, while a path for a snake that is not required to be taut, does.
\end{remark}

\section{Shortest Path for a Fat Hippo}\label{short}
When the snake is relaxed and its spine is a straight-line segment, the snake is the Minkowski sum of the length-\l segment and the unit disk. Such sums are known as \e{hippodromes} \cite{hippoPach,hippoAlon,hippoanchored,hippoPankaj,hippoAlon2}, or \e{hippos} for short. We say that a hippo is \e{fat} if its length is constant: $\l=O(1)$. In this section we show that for a fat hippo our problem can be solved in polynomial time.
\subsection*{Overview of the approach}
Our algorithm is a Dijkstra-style search in an implicitly defined graph \G (Fig.~\ref{overview}): neither the nodes nor the edges of the graph are known in advance. Instead, \G is built incrementally, by propagating the labels from the node $v$ with the smallest temporary label (as in standard Dijkstra). The labels are propagated to nodes in the ''\l-visibility`` region of $v$, which is what the snake ''sees`` while it slithers for distance \l starting from $v$. In order to discover the nodes in the region, we pull the snake from $v$ for length \l along every possible combinatorial type of path; by a packing argument, there is only a small number of the types. The edges of \G correspond to bitangents between the paths and 1-slides. We prove that the algorithm is polynomial-time by observing that the snake must travel for at least $2\pi-2$ before it ``touches'' itself with its head; this implies that the snake never ``covers'' any point of \bd{P} with more than $O(\l/\pi)=O(1)$ ``layers'', and hence there is a polynomial number of relevant bitangents.
\begin{figure}\centering
\includegraphics{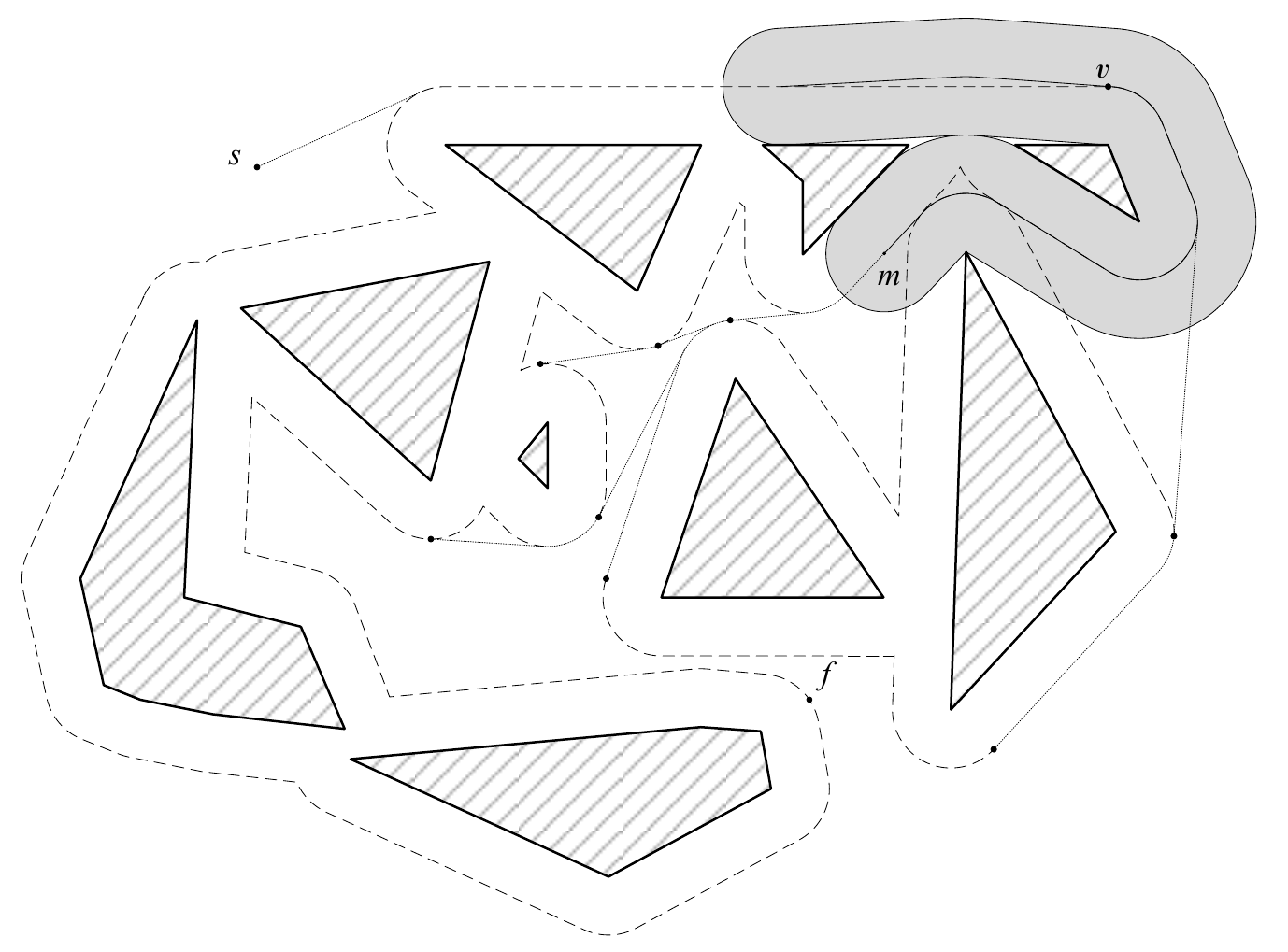}
\caption{$v$ is a node of \G; pull the snake from $v$ for distance \l along "every possible" path. Endpoints of visibility edges (some of them are shown with solid circles) become nodes of \G.}\label{overview}
\end{figure}

In the reminder of the section we elaborate on the algorithm's details.
\subsection{A Bit More Anatomy}
Consider the unit disk \m{m}. The part of the boundary of \m{m} that is also the boundary of the snake is the unit semicircle whose diameter is perpendicular to the spine at $m$; we call the part the \e{head} and the diameter the \e{headcut}. The endpoint of the spine that is not the mouth is called the \e{antimouth} $a$. The part of the boundary of \m{a} that is also the boundary of the snake is called the \e{tail}.
Refer to Fig.~\ref{anatomy}.
\subsection{Freeze!}\label{freeze}
Let us have a closer look at the structure of the pulled-taut snake at any moment of time. Some pieces of the spine are straight-line segments. The segments are bitangents between the other, non-segment pieces supported by vertices of $P$, possibly via several ``layers'' of the snake. We call each such (maximal) piece $B$ a \e{bend}; we denote the vertex that supports $B$ by $v(B)$, and say that $v$ is \e{responsible} for~$B$.
\paragraph{Snake layers}
We show that each vertex $v$ is responsible for $O(1)$ bends. Say that a bend $B$ belongs to \e{layer} 1 if there exists a point $b\in B$ such that $|bv(B)|=1$. Recursively, $B$ is \e{layer-$(k+1)$} bend if it is not layer-$k$ and there exists a point $b\in B$ such that for some point $b'$ at layer $k$, we have $|b'b|=2$ and $b'b$ fully lies within the (closure of the) snake (Fig.~\ref{anatomy}). Let $K$ be the maximum index of a layer.
\begin{lemma}\label{dubins}
$K\le\lceil\l/(2\pi-2)\rceil$.
\end{lemma}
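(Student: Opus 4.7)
The plan is to show that between any layer-$k$ bend and any layer-$(k+1)$ bend of the spine, the sub-arc of the spine joining them has length strictly greater than $2\pi-2$; summing this lower bound over the $K-1$ successive layer transitions then gives $\l > (K-1)(2\pi-2)$, whence $K \le \lceil\l/(2\pi-2)\rceil$.

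Fix $k$, and let $b'\in B'$ (a layer-$k$ bend) and $b\in B$ (a layer-$(k+1)$ bend) be the certifying points guaranteed by the recursive definition, so $|b'b|=2$ and the segment $[b'b]$ lies in the closure of the snake. Let $\sigma$ denote the sub-arc of the spine from $b'$ to $b$, and form the closed curve $C := \sigma \cup [b'b]$.

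I would then argue two geometric facts. First, $C$ is a simple closed curve: a spine point strictly between $b'$ and $b$ on the segment $[b'b]$ would be at Euclidean distance less than $2$ from each of $b',b$, and because the snake is non-self-overlapping, such a point would have to be joined by short spine sub-arcs to both endpoints---which is incompatible with $b'$ and $b$ lying on different bends at different layers. Second, the Jordan domain enclosed by $C$ contains a disk of radius $1$. The idea here is that the layer structure locally separates the layer-$k$ and layer-$(k+1)$ snake bodies by a strip of width at least $2$ that abuts $[b'b]$; on the side into which $\sigma$ departs from $b'$, this unit-wide corridor sits inside the Jordan region and accommodates a disk of radius $1$. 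The taut-string character of the spine together with the non-self-overlap property prevents $\sigma$ from cutting into this corridor.

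Given both facts, the classical inscribed-disk inequality---any Jordan region containing a disk of radius $1$ has perimeter at least $2\pi$, strictly so unless its boundary is a unit circle, which $C$ is not because it contains the length-$2$ straight chord $[b'b]$---gives $|\sigma|+2 = \text{length}(C) > 2\pi$, i.e., $|\sigma| > 2\pi-2$. The certifying sub-arcs $\sigma_1,\ldots,\sigma_{K-1}$ for the successive layer-transitions appear consecutively in the spine and share at most endpoints, so $\l \ge \sum_{k=1}^{K-1} |\sigma_k| > (K-1)(2\pi-2)$, delivering the claimed bound.

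The main obstacle I anticipate is the inscribed-disk claim. It is geometrically transparent when the layer-$k$ and layer-$(k+1)$ bends are supported by a common vertex (the concentric-annulus case: the Jordan domain is then literally an annular region of width $2$, into which a radius-$1$ disk fits tangentially), but requires more care when the supporting vertices differ or when $\sigma$ follows a non-monotone trajectory between the two bends. In the general case one must use the taut-string condition together with the global non-self-overlap to rule out any shortcut of the unit-wide corridor adjacent to $[b'b]$.
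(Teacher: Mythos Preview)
Your overall plan matches the paper's: bound the spine length between consecutive layers below by $2\pi-2$ and sum. The divergence is at the crucial step---showing that the closed curve $C=\sigma\cup[b'b]$ has length at least $2\pi$.

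The paper's argument is a one-liner that you are not using: because the snake is \emph{pulled taut}, the inflated sub-arc $\m{\sigma}$ must enclose at least one obstacle $h$---otherwise $\sigma$ could be locally shortened, contradicting tautness. The inflated obstacle $\m{h}$ has perimeter at least $2\pi$, and since $\sigma\cup[b'b]$ goes around it, $|\sigma|+2\ge 2\pi$.

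Your inscribed-disk argument tries to manufacture a unit disk inside the Jordan region from the layer structure and non-self-overlap alone. As you yourself flag, this is the weak point: the ``unit-wide corridor adjacent to $[b'b]$'' is not well-defined in general, and nothing in the non-overlap condition by itself forces the Jordan region to contain a unit disk. (Think of two layers supported by far-apart vertices with $\sigma$ threading a long thin passage.) The moment you invoke tautness to place an obstacle $h$ inside the loop, the unit disk comes for free---take $\m{p}$ for any point $p\in h$, which the spine avoids since the spine lives in $P^1$---and your perimeter inequality follows. So the gap is not fatal, but the missing idea is exactly the one the paper exploits: pulled-taut $\Rightarrow$ enclosed obstacle $\Rightarrow$ enclosed unit disk.

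A secondary issue: your simplicity argument for $C$ is shakier than you suggest (a spine point on $[b'b]$ being within distance $2$ of both endpoints does not obviously force short spine sub-arcs to those endpoints). This, too, becomes irrelevant once you argue via the enclosed obstacle, since then you only need that $C$ winds around $h$, not that $C$ is simple.
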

\begin{proof}
Let $b_k,b_{k+1}$ be points on layers $k,k+1$ such that $|b_kb_{k+1}|=2$. Let $\pi(b_kb_{k+1})$ be the part of the spine between $b_k$ and $b_{k+1}$. The Minkowski sum \m{\pi(b_kb_{k+1})} encloses at least one obstacle, $h$ (or else the snake is not pulled taut). The perimeter of the inflated obstacle \m{h} is at least $2\pi$, thus $|\pi(b_kb_{k+1})|+|b_{k+1}b_k|\ge2\pi$. But $|b_{k+1}b_k|=2$, hence $|\pi(b_kb_{k+1})|\ge2\pi-2$.
\end{proof}
As a corollary from the lemma, we have that life is very simple for a fat enough hippo:
\begin{corollary}
For $\alpha\ge2\pi-2$, an $\alpha$-fat hippo can follow a shortest thick path without self-overlap.
\end{corollary}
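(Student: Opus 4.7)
The idea is to apply Lemma~\ref{dubins} to cap the layer count at $K=1$ throughout the motion, translate that bound into absence of self-overlap, and then observe that Chew's shortest-thick-path algorithm~\cite{chew} already delivers a reference curve along which the fat hippo can slide legally.

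Under the fatness assumption the hippo's spine has length at most $2\pi-2$, so Lemma~\ref{dubins} gives $K\le1$ in every pulled-taut configuration; i.e., every bend hugs its supporting obstacle vertex directly, with no intermediate layer of snake between the bend and the vertex. I would then show that $K=1$ precludes self-overlap. Suppose, toward a contradiction, that at some moment the thick snake overlaps itself. Then there are two spine points $b_1,b_2$ that are far apart along the spine but at Euclidean distance less than $2$; pushing them apart until $b_1b_2$ first meets the snake's boundary, we may take $|b_1b_2|=2$ with the open segment inside the snake. Since the spine is locally shortest, the sub-arc between $b_1$ and $b_2$ must wrap around at least one obstacle (otherwise it could shortcut to $b_1b_2$), and any bend on that sub-arc is a layer-$\ge 2$ bend by the recursive definition, contradicting $K=1$.

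Having ruled out self-overlap, I would finish by feeding $P$ inflated by the snake's radius to Chew's algorithm and letting the hippo's mouth follow the resulting shortest reference path $\pi^*$: at each moment the spine is the trailing length-$\alpha$ subcurve of the portion of $\pi^*$ already traversed, and by the previous paragraph the surrounding thick path remains self-overlap-free. The step I expect to spend the most care on is the equivalence ``self-overlap iff layer-$\ge 2$ bend exists'': one must verify that the witness segment $b_1b_2$ lies strictly inside the snake rather than exiting through a head- or tail-cap, and handle the degenerate case where the overlap involves an endpoint cap instead of a genuine bend. Modulo this bookkeeping the corollary is an immediate consequence of Lemma~\ref{dubins} and the correctness of Chew's algorithm.
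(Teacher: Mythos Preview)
The paper gives no separate argument here; the corollary is stated as an immediate consequence of Lemma~\ref{dubins}. What is really being invoked is not the layer bound $K\le1$ per se but the perimeter estimate inside that lemma's proof: whenever two spine points are at Euclidean distance~$2$ with the joining segment inside the thick path, the spine arc between them has length at least $2\pi-2$.

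Your detour through the layer machinery has a real gap, at precisely the step you flag as delicate. When the hippo slides along Chew's shortest path, its spine is a sub-arc of that path, so every bend lies on a $1$-slide and has a point $b$ with $|b\,v(B)|=1$; by definition each such bend is layer~$1$. Self-overlap of the thick region does not change this: the recursive clause in the layer definition begins ``if it is not layer~$k$'', so a bend that is already layer~$1$ can never be promoted to layer~$2$, however close another piece of the spine comes to it. Hence the implication ``self-overlap $\Rightarrow$ some layer-$\ge2$ bend exists'' is simply false for a spine that traces Chew's path, and appealing to $K\le1$ produces no contradiction. (Layers $\ge2$ only arise for the pulled-taut snake that wraps around \emph{itself}; such a snake is non-selfoverlapping by construction, so you cannot assume self-overlap in that model either.)

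The repair is to drop the layer language and run the perimeter argument from the proof of Lemma~\ref{dubins} directly on your witness points $b_1,b_2$: the closed curve $\pi(b_1b_2)\cup b_2b_1$ encloses an inflated obstacle of perimeter at least $2\pi$, hence $|\pi(b_1b_2)|\ge 2\pi-2$, which already exhausts the hippo's length. With this substitution your plan goes through, and the head/tail-cap bookkeeping you anticipated becomes unnecessary.
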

\paragraph{Tailed and headed bends}
One may wonder why we did not opt for a simpler definition of layer-$k$ bend $B$ as one having a point $b$ such that $|bv(B)|=2k-1$ and $bv(B)$ lies fully within the (closure of the) snake. The reason are two special kinds of bends which make the snake's portrait more complicated than in the case of an infinite-length snake. Specifically, we say that a bend $B$ at layer $k$ is \e{tailed} (resp.\ \e{headed}) if \m[2k-3]{B} touches the tail (resp.\ head). The simpler definition may not work for such bends (Fig.~\ref{anatomy}).

Any bend that is not tailed or headed is an arc of a circle (actually, it is part of a slide). A tailed or headed bend $B$ is a different shape -- string pulled taut against a ball touching $v(B)$.
\paragraph{Snake configuration}
The snake can be reconstructed in linear time as soon as the following is specified: (1)~list of the vertices responsible for the bends; (2)~for each bend, its layer; (3)~for a headed bend -- the vertex or the edge of $P$ in contact with the head, and the slope of the headcut; (4)~similar information for a tailed
bend; (5)~positions of the mouth $m$ and antimouth~$a$. We will call (1)--(5) the \e{configuration} of the snake.
\begin{lemma}\label{complexity}
The list (1) contains $O(n)$ vertices.
\end{lemma}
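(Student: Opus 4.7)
The plan is to bound the number of bends supported by each vertex of $P$ by $K$, and then multiply by the $n$ vertices of $P$; combined with Lemma~\ref{dubins} (which gives $K = O(1)$ for a fat hippo), this yields the desired $O(n)$ bound.

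First I would establish the key claim: no two distinct bends at the same vertex $v$ can occupy the same layer. The intuition is that the snake, being pulled taut against $v$ and having thickness~$2$, cannot graze $v$ twice at the same radial distance without overlapping itself. More formally, suppose $B$ and $B'$ are bends with $v(B)=v(B')=v$, and walk along the spine from a point $b\in B$ to a point $b'\in B'$. I would argue, by the same taut-rope enclosure argument used in the proof of Lemma~\ref{dubins}, that the Minkowski inflation of this spine piece together with $B$ and $B'$ surrounds $v$'s inflated-by-$1$ obstacle disk: otherwise a slack in the rope would let the pull eliminate one of the two touches of $v$. This enclosure, together with the snake's thickness~$2$ and the non-self-overlap condition, forces $b'$ to sit at radial distance at least $2$ greater from $v$ than $b$, matching the radial jump in the recursive definition of layers. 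Hence the layers of $B$ and $B'$ differ, and every vertex hosts at most one bend per layer.

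The main obstacle will be verifying this per-vertex uniqueness for tailed and headed bends, whose shape (a string pulled against a unit ball lodged at the head or the tail) departs from a pure unit-circle arc and so does not fit the simple radial picture. For these bends I would observe that the same topological enclosure argument still applies: the snake segment joining two visits to $v$ still encircles $v$'s inflated disk, while the head/tail contact only perturbs the geometry locally and does not weaken the thickness-$2$ clearance requirement perpendicular to the spine, so a new visit still lands on a strictly higher layer. Once each vertex is shown to be responsible for at most one bend per layer, the count follows: the total number of entries in list~(1) is at most $nK$, which is $O(n)$ by Lemma~\ref{dubins}.
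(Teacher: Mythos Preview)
Your proposal is correct and follows essentially the same approach as the paper: each vertex supports at most $K$ bends, so the total is $O(Kn)=O(n)$ by Lemma~\ref{dubins}. The paper simply asserts the per-vertex bound in one line (and parenthetically notes an alternative charging of bends to enclosed obstacles giving $O(n+h)$), whereas you spell out why distinct bends at the same vertex must land on distinct layers; this extra justification is sound but more detailed than the paper's own proof.
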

\begin{proof}
Each vertex may be responsible for up to $K$ bends, hence the total number of bends is $O(Kn)=O(n)$. (In fact, since the part of the snake between consecutive bends leaning on one vertex, encloses at least one obstacle, every bend can be charged either to a vertex or to an obstacle; thus the number of bends is actually $O(n+h)=O(n)$, where $h$ is the number of obstacles.)
\end{proof}
\subsection{Move!}\label{move}
Suppose that we are given the configuration \cal{C} of the snake at some time; let $\pi$ be the spine when the snake is in \cal{C} and let \mc be the position of the mouth in \cal{C}. Suppose we are also given a path $\gamma$ for the mouth starting at \mc. Assume $\gamma$ has the following properties: (1) it is consistent with \cal{C} in that the tangent to $\gamma$ at \mc coincides with the tangent to $\pi$ at \mc; (2)~it has a polynomial number of pieces, each of constant description complexity; (3) $|\gamma|\le\l$. In what follows we assume that every path $\gamma$ has these properties. We claim that in polynomial time we can check whether $\gamma$ is a feasible path for the mouth, i.e., whether the snake stays obstacle-free when $m$ is pulled along $\gamma$ (say, at unit speed).

First we note that it is enough to check only whether the mouth moves feasibly. Indeed, the first time that the snake (possibly) becomes infeasible as $m$ follows $\gamma$, the mouth is necessarily a part of ``certificate of infeasibility''. This is so because the only way that the snake experiences ``side pressure'' is due to appearance of a headed bend. Other than that, any piece \cal{P} of the snake is merely pulled by the preceding piece, \cal{P'}: either \cal{P} exactly follows the same path that just was feasible for \cal{P'}, or \cal{P} is a tailed bend $B$. Of course, the bend morphs as the time passes, but only becomes ``more feasible'' with time, i.e., the free space around $B$ increases -- $B$ is pushed only by the tail, and the tail ``moves away'' with time.

Let now $\gamma'$ be a piece of $\gamma$. If we know how each piece \cal{P} of the snake changes with time, we can test whether $m$ stays feasible while following $\gamma'$, by checking the feasibility against each \cal{P} in turn. That is, while neither the configuration of the snake nor the piece $\gamma'$ of $\gamma$ changes, the feasibility test can be done piece-versus-piece in constant time (assuming real RAM). Observe that overall there is only a polynomial number of configuration changes. Indeed, the configuration may change only due to one of the following \e{events}: (1)~the tail starts to follow another feature of $P$, (2)~a headed
bend appears or changes its combinatorial structure, (3)~a tailed
bend disappears or changes its combinatorial structure. But each of the events (1)--(3) may happen only once per vertex-bend-layer triple; thus, by Lemmas~\ref{dubins} and~\ref{complexity} there is only a polynomial number of events.

Tracking the configuration changes is easy \e{given} the way each piece changes with time. For event (1), we only have to know how the tail speed changes with the time between consecutive events (the tail speed is not necessarily constant, we elaborate on it in the next paragraphs). For event (2) we check what is the first time that the head collides with a piece or when a headed
bend hits a vertex; all this can be done in polynomial time as there is only a linear number of candidate collisions. Event (3) is similar. The next event time is then the minimum of the event times over all the pieces.

It remains to show how to determine the way each piece changes. Here the crucial role is played by the headed and tailed bends; again, it is the finiteness of the snake length that makes things involved. Let $B_1=b_1b_1'$ be the first such bend counting from $m$ (Fig.~\ref{anatomy}). Assume that $B_1$ is a layer-$k$ tailed bend;
the situation with a headed bend is actually simpler (because the mouth speed is constant). Let $c_1b_1$ and $b_1'c_1'$ be the pieces adjacent to $B_1$ -- both are bitangents (straight-line segments, possibly of 0 length) to $B_1$ and adjacent bends. Every point of the spine between $m$ and $c_1$ moves at speed 1, and none of the bends before $B_1$ changes with time.

To figure out what happens after $c_1$, we have to solve a constant-size differential equation that describes the ''propagation'' of speed of motion of the spine. Specifically, let $u(\tau)$ denote the speed at which the antimouth moves at time $\tau$. Knowing $u(\tau)$ and knowing the initial position of the antimouth, we can write the antimouth position as a function of time, and hence we know \m[2k-2]{a} as a function of time. The points $b_1$ and $b_1'$ are points of tangency to \m[2k-2]{a} from $c_1$ and $c_1'$; thus knowing \m[2k-2]{a} we know how the length $|c_1b_1|+|\pi(b_1b_1')|+|b_1'c_1'|$ changes with time. Knowing that, and recalling that at $c_1$ the spine moves with unit speed, we can write what the spine speed at $c_1'$ is as a function of $\tau$.

Now, the spine speed does not change between $c_1'$ and the next point, $c_2$, that is the start of the tangent to the next headed or tailed bend, $B_2$. We perform at $B_2$ the same operations as above, and get the speed of motion of the spine past the bend $B_2$. Continuing in this fashion, in the end, after going through all bends, we obtain some expression, $\E(u(\tau))$ for the spine speed after the last bend.

Finally, to close the loop, we solve the equation
\begin{equation}\label{diff}
u(\tau)= \E(u(\tau))
\end{equation}
Since there is only a constant number of layers (Lemma~\ref{dubins}), there is only a constant number of the headed and tailed bends, and hence the expression $\E$ is a sum of a constant number of terms (each containing $u(\tau)$ and $\int\!u(\tau)d\tau$), each of constant description complexity. In our computation model, we can solve the equation for $u(\tau)$ in constant time. Substituting $u(\tau)$ back into the formulae for the different bends, we obtain the spine as a function of time, as desired.
\subsection{See!}
Assume that at some point the snake is in configuration \cal{C}. What happens next, as the snake follows the optimal path to $f$? Local optimality conditions imply that it will ``wiggle around the obstacles'' for some time and then ``shoot'' towards a vertex of $P$. We formalize this below.

\paragraph{Final piece of anatomy} The \e{eye} of the snake is collocated with the mouth. The snake can \e{see} a point $x\in P^1$ if the segment $mx$ lies fully within $P^1$ and is tangent to the spine at~$m$. Recall that $P^1=P\setminus\m{\bd{P}}$ where \m{\bd{P}} is the Minkowski sum of \bd{P} with \e{open} unit disk; hence $mx$ can go along the boundary of $P^1$ (with $x$ being, e.g., an endpoint of a slide; see Fig.~\ref{overview}). The snake itself is transparent for its eye: we do not forbid $mx$ to intersect the snake.
\begin{definition}\label{def}
A point $p$ on \bd{P^1} is \e{\l-visible} from \mc if there exists a path $\gamma$ for the mouth ending in a point $m^*$ such that $m^*$ sees $p$ and $m^*p$ is tangent to \bd{P^1} at $p$. We say that $\mc p$ is an \e{\l-visibility edge}, or an \e{\l-edge} for short. We say that $\gamma$ is the \e{wiggle segment} of $\mc p$, and that $m^*p$ is the \e{visibility segment} of the edge. (We remind that we assume $\gamma$ enjoys the properties listed in the beginning of Section~\ref{freeze}: tangent at \mc consistent with \cal{C}, polynomial-size description, length $\le\l$.)
\end{definition}
To be on a (locally) optimal path, the mouth would like to follow an \l-edge also past $m^*$. This may not be feasible due to a conflict with the snake itself. However, such conflicts can be discovered ``on-the-fly'', as the mouth attempts to move along $m^*p$. Specifically, try to move the mouth along $m^*p$, as described in Section~\ref{move}. If during the motion, the head collides with the snake, note what kind of bend the head collides with. If this bend $B$ is not tailed, adjust the path for the mouth so that it is tangent to \m[2]{B}, and follow the adjusted path. If by the time the mouth reaches \m[2]{B}, the bend $B$ is ``gone'', i.e., the head ``misses'' the snake, we know that we are dealing with a tailed bend (or with the tail itself). We identify the time and place of contact with the bend by solving a differential equation similar to (\ref{diff}): assuming the speed of the tail is $u(\tau)$, we know how the bitangent between $m$ and the corresponding ball centered at $a$ changes; in particular, we know its length $l(\tau)$ as a function of time. The time $\tau^*$ when the head hits the tail is then the solution to the equation $\tau^*=l(\tau^*)$. After solving the equation we know the configuration of the snake at $\tau^*$, and continue moving the mouth to $p$ around the tail.


The above procedure essentially ``develops'' the path $\gamma$ piece-by-piece. This is consistent with Section~\ref{move} where we described how to pull the snake along a \e{given} path $\gamma$ for the mouth: the pulling was done piece-by-piece, which means that it can be performed even if $\gamma$ is not given in advance but instead is revealed piece after piece. More importantly, using the procedure, we can develop \e{all} \l-edges incident to \mc, piece-by-piece, in a BFS manner. We describe this below.

To initiate the developing, look at the visibility segment $m'\mc$ of the \l-edge that has led the mouth to \mc. The segment is tangent to a slide $S\ni\mc$. The slide $S$ down from \mc (i.e., in the direction consistent with $m'\mc$) is the first (potential) piece of a new \l-edge. We extend bitangents from the piece to all other 1-slides and to all pieces of the spine inflated by 2. These bitangents become the next potential pieces for the \l-edges. After the bitangents, the next potential pieces are the slides and the spine pieces at which the bitangents end. We continue in this way (possibly adjusting the pieces of a particular edge to account for snake self-interaction) until for each edge its wiggle segment reaches length \l.

We now bound the time spent on developing all \l-edges from \mc. Each edge has linear complexity; this can be proved identically to Lemma~\ref{complexity}.
Thus all edges can be grown in polynomial time if the number of edges is polynomial. This is what we prove next:
\begin{lemma}\label{ledges}
Let \bb{E} be the set of all \l-edges incident to \mc. $|\bb{E}|=O(n^2)$.
\end{lemma}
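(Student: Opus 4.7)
My plan is to bound $|\bb{E}|$ by the number of bitangents between pairs of $O(n)$ features of the domain and of the snake. The crucial structural fact is that the visibility segment $m^*p$ concluding each \l-edge $\mc p$ is a bitangent: it is tangent to $\bd{P^1}$ at $p$ by the definition of \l-visibility, and it is tangent at $m^*$ to the \e{last piece} of the wiggle $\gamma$ that carried the mouth from \mc to $m^*$. This second tangency follows from the algorithmic construction described in the present subsection---each piece of $\gamma$ is either a portion of a 1-slide of $\bd{P^1}$, a portion of $\m[2]{\pi}$, or a bitangent joining two such portions---so $\gamma$ is locally shortest at $m^*$ and the visibility segment must leave $m^*$ tangent to whichever feature contains it.

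Next I would enumerate the features. The boundary $\bd{P^1}$ has $O(n)$ maximal 1-slides (one per vertex of $P$). By Lemma~\ref{complexity} the spine $\pi$ has $O(n)$ pieces, so $\m[2]{\pi}$ contributes $O(n)$ additional features (arcs and straight pieces). Altogether there are $O(n)$ features. For each ordered pair of a feature (containing $m^*$) and a slide (containing $p$), there are only $O(1)$ common bitangents. Multiplying gives $O(n^2)$ candidate visibility segments, and since each \l-edge $\mc p$ is identified by its endpoint $p$, we conclude $|\bb{E}|=O(n^2)$.

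The main obstacle will be making the last-piece tangency rigorous in two degenerate situations. First, when the wiggle has length zero so $m^*=\mc$, the ``last piece'' is the 1-slide through \mc in the tangent direction fixed by \cal{C}; one must check that the visibility segment is still a bitangent between that slide and the slide containing $p$, contributing only $O(n)$ edges. Second, when the last piece is a headed or tailed bend of $\m[2]{\pi}$, which is not a circular arc but a string pulled taut against a ball, the tangency must be re-verified for this more intricate shape---the bend nonetheless remains one of the $O(n)$ features counted. A minor additional subtlety is that the same bitangent may serve as the visibility segment of several different wiggles from \mc; but all such wiggles produce \l-edges with the same endpoint $p$, so the bitangent count really does upper-bound $|\bb{E}|$.
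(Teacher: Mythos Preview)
Your bitangent-counting idea is the right intuition, and it is also what the paper does \emph{per combinatorial type of wiggle}. The gap is in how you obtain the global ``$O(n)$ features that can contain $m^*$''. You invoke Lemma~\ref{complexity} to say ``the spine $\pi$ has $O(n)$ pieces, so $\m[2]{\pi}$ contributes $O(n)$ additional features''. But Lemma~\ref{complexity} bounds the complexity of the spine \emph{at one fixed instant}. The spine you need is the one at the moment the mouth reaches $m^*$, and that spine depends on the wiggle $\gamma$; different $\gamma$'s produce different spines, hence different collections of pieces of $\m[2]{\pi}$. Nothing you wrote bounds the union of these collections over all admissible wiggles from \mc. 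For the pieces that are genuine layer-$k$ slides this can be repaired (they all lie on one of $O(Kn)=O(n)$ fixed circles of radius $2k-1$ about vertices), but for head-over-tail pieces it cannot: those curves are parametrised by the tail position, which varies continuously with $\gamma$, so they do not come from a finite catalogue. Your remark that such a bend ``nonetheless remains one of the $O(n)$ features counted'' is therefore not justified.

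The paper closes exactly this gap with an argument you do not attempt. It groups all wiggles by \emph{combinatorial type} (the sequence of vertices visited), picks one representative $\gamma^*$ per type, and observes that any \l-edge in that class is a bitangent from (a prefix of) $\gamma^*$ to a 1-slide --- this is where the $O(n^2)$ per type comes from. The crucial additional step is showing that the number of types is $O(1)$: restricting to a radius-\l disk about \mc, trapezoidising, and passing to the dual graph, one sees that the number of homotopically distinct length-\l walks depends only on the number of \emph{holes} meeting that disk; Claim~\ref{H} (a packing argument using fatness) shows this is $O(1)$. Without this homotopy bound, the multiplicity of wiggles reaching the same geometric circle in different ways --- and the continuum of head-over-tail curves --- is exactly what prevents your direct feature count from going through.
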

\begin{proof}
By definition, every edge starts from a path $\gamma$ for the mouth of length at most \l. Say that paths $\gamma_1,\gamma_2$ are the same \e{combinatorial type} if the sequence of vertices visited by $\gamma_1$ is a subsequence of that for $\gamma_2$ (or vice versa). Being of the same type is an equivalence relation that splits \bb{E} into classes. For each class, keep only the path with the longest sequence of visited vertices, and identify the class with the path. Let \bb{E^*} be the obtained collection of classes.

Let $\gamma^*\in\bb{E^*}$. Any \l-edge that has a path $\gamma\in\gamma^*$ as its wiggle segment is obtained by extending a bitangent from $\gamma$ (or equivalently from $\gamma^*$) to some slide. Thus the total number of \l-edges having a path in $\gamma^*$ as the wiggle segment is at most the number of bitangents from $\gamma^*$ to the slides, which is $O(n^2)$ since $\gamma^*$ is $O(n)$-complexity.

It remains to show that $|\bb{E^*}|=O(1)$. A standard argument shows that $|\bb{E^*}|$ depends only on the number of the \e{holes} in $P^1$ reachable by length-\l paths from \mc (not on the number of vertices): Restrict attention to radius-\l disk \cal{D} centered on \mc, do vertical trapezoidation of $P^1\cap\cal{D}$, and let $G$ be the dual graph of the trapezoidation. The number of nodes of $G$ of degree higher than 2 is linear in the number of holes (in $P^1$) that intersect \cal{D}; call this number $H$. Transform $G$ to a graph $G'$ by replacing each path in $G$ by an edge between degree-3 and higher nodes; the number of nodes in $G'$ is $O(H)$. Each path from \bb{E^*} is a walk in $G'$ that visits every edge of $G'$ at most $K=O(1)$ times. The number of such walks $G'$ depends only $H$, and we prove now that $H$ is constant:
\begin{claim}\label{H}
Let $h_1\dots h_m$ be the holes (of $P$) that intersect \cal{D}. The Minkowski sum $\m{h_1\cup\dots\cup h_m}$ has a constant number of connected components.
\end{claim}
\begin{proof}
If $h_1,h_2$ are in different connected components of the sum, the distance between the holes is at least 2. Within constant distance of \mc, there can be only a constant number of points pairwise-separated by distance at least 2; thus there is only a constant-size set of holes pairwise in different connected components.
\end{proof}
This competes the proof of the lemma.
\end{proof}
\subsection{Label!}
We are ready now to traverse the "\l-visibility" graph \G of $P$, searching only the relevant part of \G and not building the whole graph explicitly. The label of each node in \G consists of two parts: the \e{distance label} (storing the distance from $s$) and the \e{configuration label} (storing the snake configuration at which the node was reached). That is, a node may have several labels -- one per configuration. However, since there is only a constant number of different configurations of pulled-taut snakes that may reach the node (Lemma~\ref{ledges} and Claim~\ref{H}), the total number of labels of any node is constant.

Start from $\mc=s$, and assign distance label 0 to $s$. The algorithm grows the graph \G whose edges are the discovered \l-edges and whose nodes are the endpoints of the \l-edges. Note that by the definition of \l-visibility (Definition~\ref{def}), all nodes of \G reside on slides. At a generic step, take the node with the smallest (temporary) distance label, make the label permanent, and construct \l-edges from the node. The endpoints of the edges join \G and get their (temporary) distance labels and configuration labels. In addition, each already existing node over which the mouth passes, gets its distance label updated if the distance label carried with the mouth is smaller than the node's current label \e{and} the configuration label carried with the mouth is the same as the node's configuration label. The search stops when $f$ is reached.

We now prove that the algorithm terminates in a polynomial number of steps. The visibility segments are bitangents between 1-slides and pieces of the wiggle segments. These latter pieces are of two types: (1)~slides up to layer $K$ and (2)~curves that are obtained as the head rolls over the tail, possibly padded by up to $K$ layers of the snake. There is in principle an uncountable number of possible pieces of the second type. Nevertheless, every step of the algorithm discovers at least one new visibility segment tangent to a piece of the first type. Since the number of slides up to layer $K$ is $O(n)$, there are $O(n^2)$ of such visibility segments. Each such segment may be discovered only a constant number of times --- once per homotopy type of the snake reaching the endpoint of the segment. Thus overall there are $O(n^2)$ steps.

Overall, we have our main positive result:
\begin{theorem}
Shortest path for a fat hippo can be computed in polynomial time.
\end{theorem}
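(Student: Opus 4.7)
The plan is to establish correctness and a polynomial bound on the running time of the Dijkstra-style algorithm assembled in Sections~\ref{freeze}--\ref{move} and in the \textbf{See!} and \textbf{Label!} subsections. The key structural claim behind correctness is that any shortest feasible mouth trajectory from $s$ to $f$ decomposes into a sequence of \l-edges (Definition~\ref{def}). First I would argue that on any locally optimal trajectory the mouth alternates between sliding tangentially along a feature (a 1-slide or an inflated piece of the spine) and shooting along a bitangent to such a feature; between two consecutive shots the mouth wiggles for total length at most \l, which is exactly an \l-edge. A standard exchange argument handles excursions that try to wiggle longer: once the wiggle exceeds \l the snake has entered a new configuration and the remainder of the trajectory can be re-decomposed without loss.

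Next I would verify that the search in the \textbf{Label!} subsection is sound despite nodes carrying multiple labels. The future evolution of the snake depends only on the current (point, configuration) state, not on past history, so optimal substructure holds in the implicit state graph \G; since by Lemma~\ref{ledges} and Claim~\ref{H} each tangency point has only $O(1)$ relevant configurations, the label multiplicity is constant and the standard Dijkstra invariant --- that permanent labels are optimal --- carries over unchanged.

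For the time bound I would reuse the accounting already done in \textbf{Label!}: each extract-min step uncovers at least one new visibility segment tangent to a slide up to layer $K=O(1)$; with $O(n)$ such slides and $O(n)$ bitangents per slide this yields $O(n^2)$ steps total. Per step, the work is to develop the $O(n^2)$ outgoing \l-edges (Lemma~\ref{ledges}), each of complexity $O(n)$ by a Lemma~\ref{complexity}-style argument; and developing an edge reduces, via Section~\ref{move}, to handling at most $O(n)$ configuration-changing events, each resolved by solving a constant-size differential equation of type~(\ref{diff}) in $O(1)$ time on the real RAM.

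The hard part will be the \l-edge decomposition claim --- showing that concatenating \l-edges, with the accompanying $O(1)$ configuration multiplicity, is rich enough to realize any shortest trajectory. I expect this to follow from a cut-and-paste argument combined with the fat-hippo bound $K=\lceil\l/(2\pi-2)\rceil=O(1)$ from Lemma~\ref{dubins}, which uniformly bounds how much history the snake's current configuration can encode. The running-time analysis, while consisting of several pieces, is essentially assembled from existing lemmas and should pose no genuine difficulty beyond careful bookkeeping.
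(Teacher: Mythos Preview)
Your plan matches the paper's approach: the theorem there is stated as a direct corollary of the algorithm and analysis developed across the \textbf{Freeze!}, \textbf{Move!}, \textbf{See!}, and \textbf{Label!} subsections, and you are assembling exactly those pieces (Lemmas~\ref{dubins}, \ref{complexity}, \ref{ledges}, Claim~\ref{H}, the $O(n^2)$ step count from \textbf{Label!}, and the per-step event bound from \textbf{Move!}) in the same way. Your explicit isolation of the \l-edge decomposition as a correctness obligation is more careful than the paper, which treats it informally via ``local optimality conditions'', but the underlying argument is the same.
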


\section{Being a Long Snake is Hard}\label{hard}
In this section we prove that if the snake length is not bounded, deciding existence of a path for the snake is NP-hard. Specifically, our problem is: Given polygonal domain $P$ and points $s,f$, find a thick non-selfoverlapping $s\-f$ path (i.e., a thick $s\-f$ wire).

We show the problem's hardness by a reduction from planar 3SAT. Recall that the \e{graph} of a 3SAT instance has nodes for all variables and clauses, and two types of edges: (1)~a cycle $\mathbb{C}$ through all variables, and (2)~edges connecting every clause to its three variables (Fig.~\ref{1}). \e{Planar 3SAT} is a restriction of 3SAT to instances whose graph is planar; Lichtenstein \cite{p3sat} proved that planar 3SAT is NP-hard. To show the hardness of wire routing, we start from an instance $I$ of planar 3SAT; we identify the instance with its (planar) graph, and the variables and clauses with the points in the plane into which they are embedded.
\begin{figure}\centering
\includegraphics{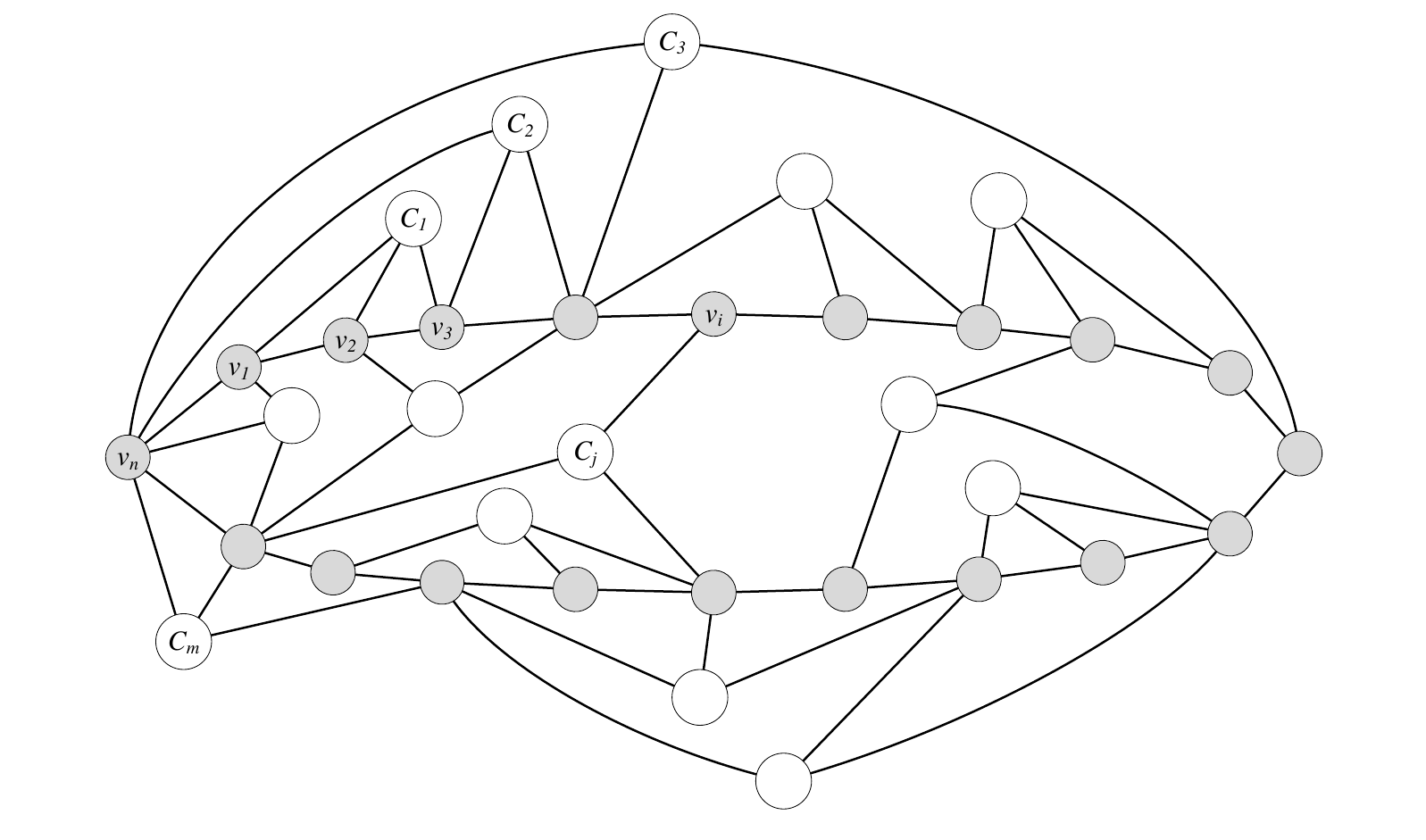}
\caption{(The graph of) an instance $I$ of 3SAT. The variables are shaded circles, the clauses are hollow circles.}\label{1}
\end{figure}
\begin{figure}\centering
\includegraphics{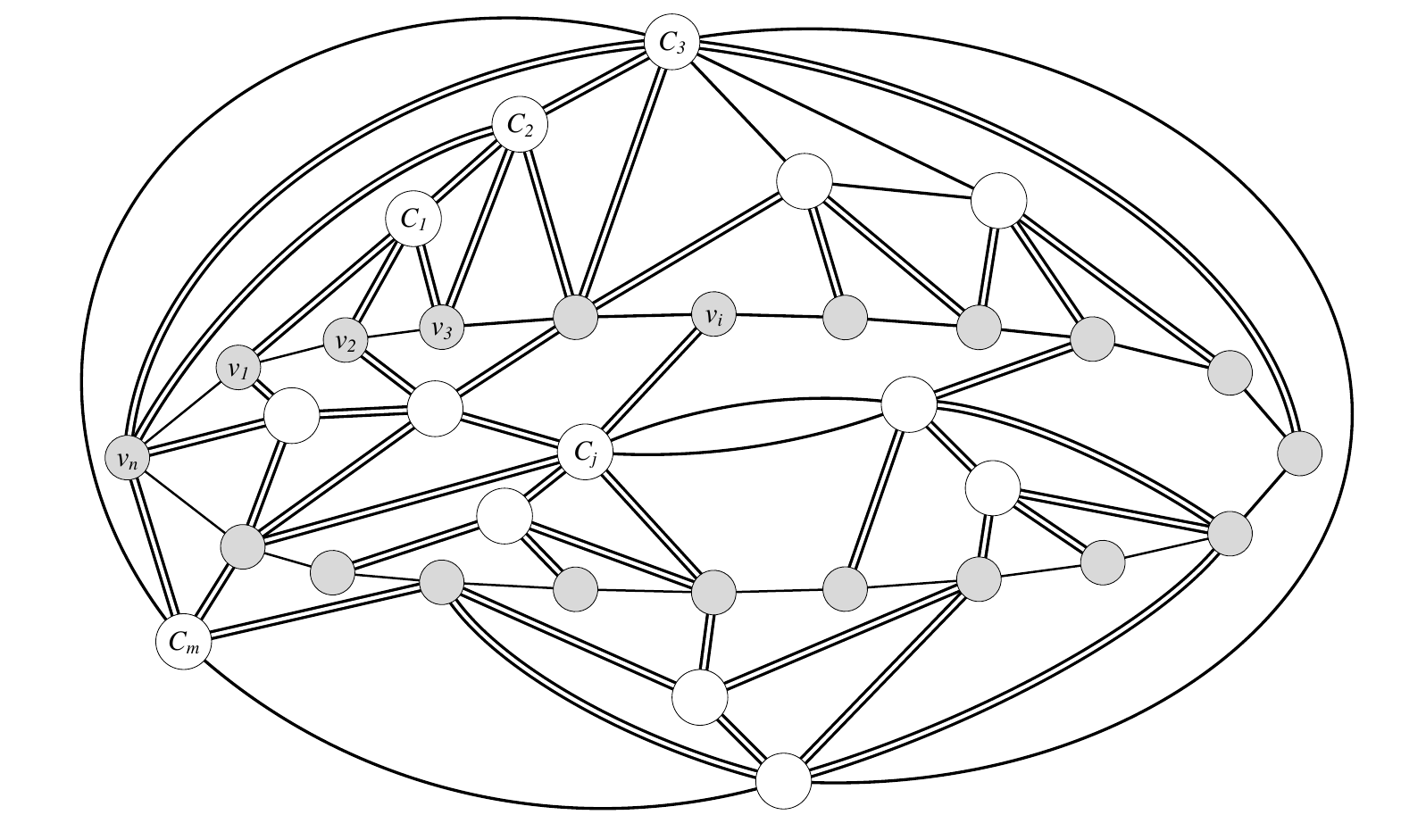}
\caption{$I$ augmented with parent-child edges, sibling edges, and with variable-clause edges duplicated.}\label{2}
\end{figure}
\begin{figure}\centering
\includegraphics{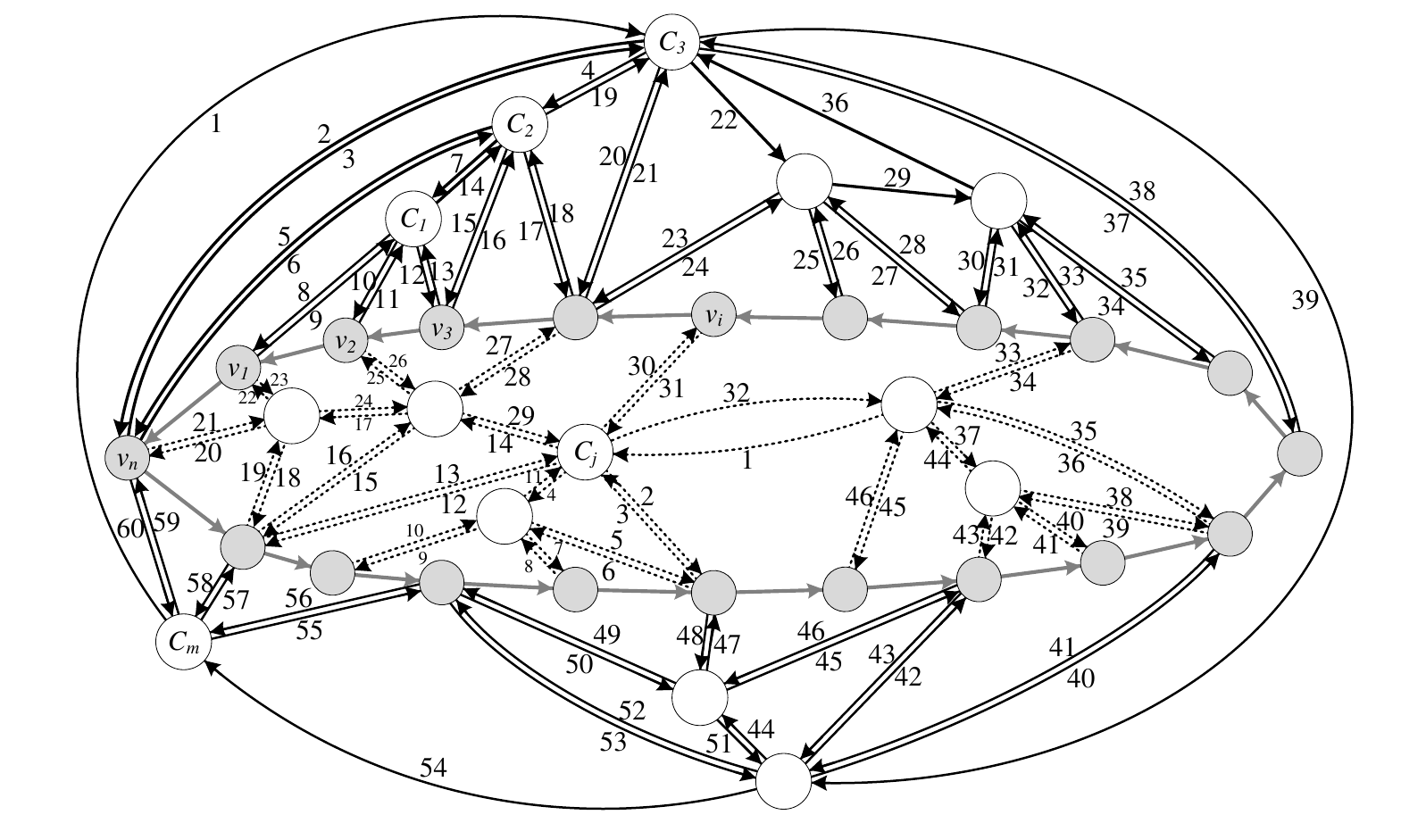}
\caption{The closed walks \wout, \win; the numbers indicate the order in which edges are traversed by the walks.}\label{3}
\end{figure}
\begin{figure}\centering
\includegraphics{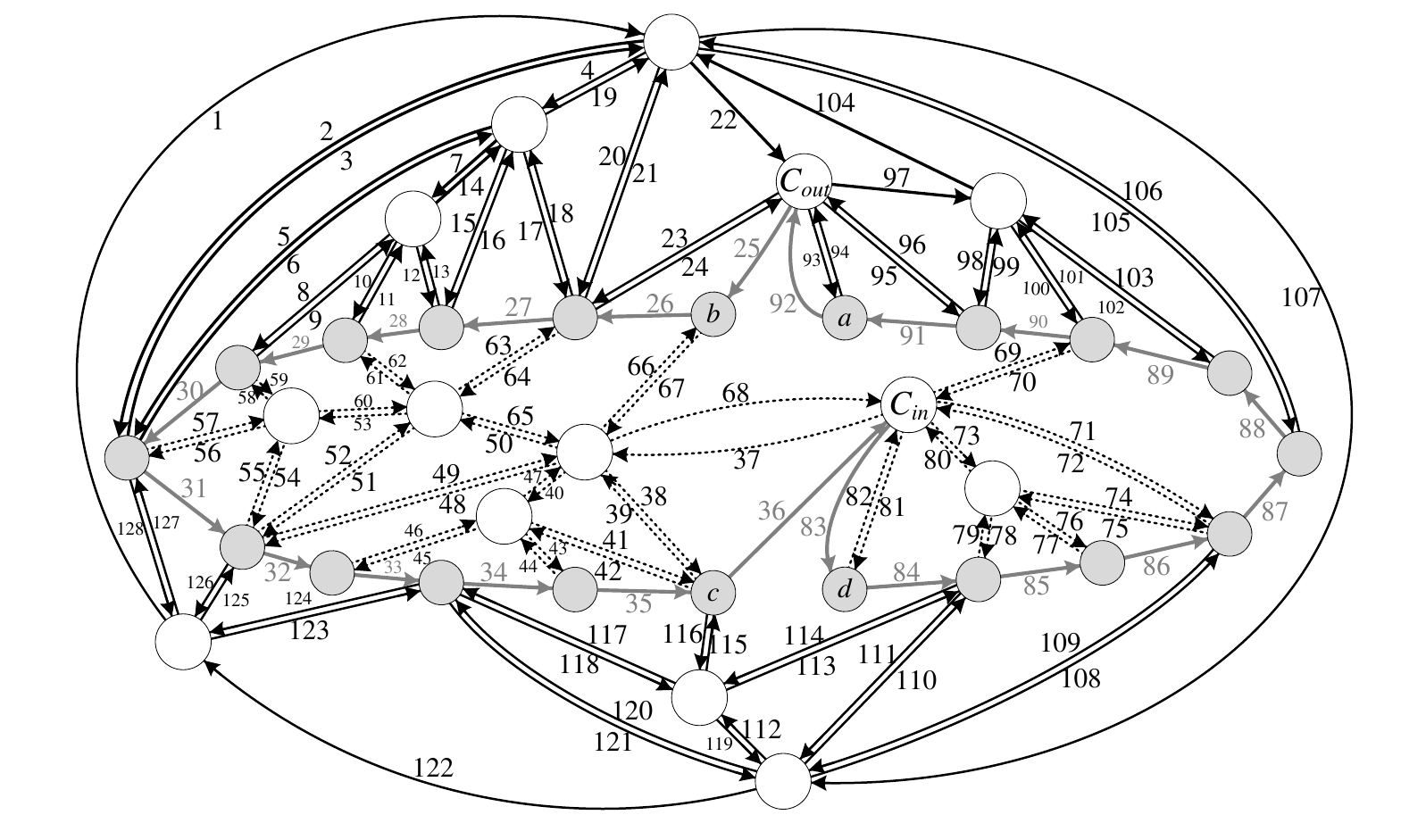}
\caption{The closed walk \w; the numbers indicate the order in which edges are traversed by the walk.}\label{4}
\end{figure}

\subsection*{Augmenting $I$}
The cycle $\mathbb{C}$ splits the plane into two parts; each clause belongs to exactly one part. We say that the clauses, edges, etc.\ inside (resp.\ outside) $\mathbb{C}$ are \e{inner} (resp.\ \e{outer}).

Focus on the outer clauses. Define parent-child relationship between the clauses as follows. The clauses that belong to the outer face of the graph $I$ are \e{orphans} -- they have no parents. Now imagine removing an orphan $C$, together with the edges that connect $C$ to the variables. Any clause $C'$ that now (after $C$'s removal) belongs to the outer face of $I$ is a \e{child} of $C$ (and $C$ is the \e{parent} of $C'$). Recursively, any clause $C''$ is the parent for all clauses that appear on the outer face of $I$ after removal of $C''$ (and all its ancestors).

We now augment $I$ with new edges. For any parent, the children are angularly sorted around the parent. We connect the first and the last child to the parent; if a parent has only one child, connect the parent and the child by 2 parallel edges (parallel in the graph-theoretic sense, in the embedding they are not parallel). Add also edges between siblings; in particular, connect orphans with a cycle, in the order in which they appear on the outer face of $I$ (Fig.~\ref{2}). Finally, add a parallel edge for each clause-variable edge (so that every clause is connected to each of its variables with 2 parallel edges).
\subsection*{The walks \wout,\win}
Let \wout be the closed walk that goes through the outer clauses and all variables in the DFS manner, with preference to go right (as seen from a clause) and down (i.e., towards $\mathbb{C}$). Specifically, at the "top level", the walk contains the cycle through all orphans. In addition, at each orphan $C$ (and in general, at each clause) the walk is the DFS traversal of the subtree of $C$: it goes to the rightmost (as seen from $C$) variable of $C$, then goes back to $C$ (using the parallel edge), and then -- either to the next variable of $C$ or to the rightmost child $C'$ of $C$ (whichever is more to the right). At $C'$, the walk recurses down to the rightmost variable of $C'$, then goes back to $C'$, and then again -- either to the next variable of $C'$ or to the rightmost child of $C'$, etc. The walk follows the edge to the sibling or to the parent of a clause $C''$ after all children and variables of $C''$ have been visited. In particular, the variables of childless clauses are just visited one-by-one, from right to left. Refer to Fig.~\ref{3}.

We do an analogous augmentation of the subgraph of $I$ inside the cycle $\mathbb{C}$. Specifically, we choose some face $F$ of $I$ inside $\mathbb{C}$, and let $F$ play the role of the outer face: The orphan clauses are those on the boundary of $F$; the children are defined recursively as the clauses that appear on $F$ after deletion of parents. As before, we duplicate variable-clause edges. Let \win be closed walk analogous to \wout: \win goes through the orphans, recursing to variables and children in the DFS manner, with the preference to go left (as seen from a clause) and towards $\mathbb{C}$. Refer to Fig.~\ref{3}.
\subsection*{The walk \w}
We now splice the walks \wout, \win, and the cycle $\mathbb{C}$ into a single closed walk \w through $I$. Let $ab$ be an arbitrary edge of $\mathbb{C}$, and let $C_{out}$ be an outer clause that belongs to the face of $I$ that has $ab$ on the boundary. Remove $ab$ from $\mathbb{C}$, and add edges $aC_{out},bC_{out}$ (Fig.~\ref{4}). Similarly, remove an edge $cd$ from $\mathbb{C}$, and add edges $cC_{in},dC_{in}$ to an inner clause $C_{in}$.

The walk \w starts from following \wout until reaching $C_{out}$, at which point it uses the edge $C_{out}b$. (This may not necessarily be the first time the walk visits $C_{out}$; the walk uses $C_{out}b$ so that the usage is consistent with the ordering of edges around $C_{out}$ -- refer to Fig.~\ref{4}, where $C_{out}$ is reached first by edge 22, while $C_{out}b$ is only 25th.) From $b$, the walk follows the cycle $\mathbb{C}$ up to $c$, where it uses the edge $cC_{in}$ to enter inside $\mathbb{C}$. From $C_{in}$, the walk \w follows the walk \win all the way around back to $C_{in}$. It then uses $C_{in}d$ to get back to $\mathbb{C}$, upon which it traverses $\mathbb{C}$ from $d$ to $a$. At $a$, the walk uses $aC_{in}$ and follows the rest of \wout from there.
\subsection*{From 3SAT to wire routing}
As the last step of the reduction, we convert $I$ to an instance of finding a thick wire. For that, we thicken the edges of $I$, turning them into channels of width~2. We replace variables and clauses with gadgets shown in Figs.~\ref{var} and~\ref{clause} resp. The connections (channels) between variables and clauses (Fig.~\ref{connections}) ensure that whenever a channel from a clause to a variable is used by the wire, the variable satisfies the clause.

We cut (the channel corresponding to) one of the edges of $\mathbb{C}$, and place the points $s,f$ on the opposite sides of the cut. This turns the closed walk \w into an $s\-f$ path. By our construction, the only $s\-f$ wire in the instance is one that follows the walk \w in $I$, possibly, omitting some clause-variable edges (channels). Indeed, neither channels nor gadgets have any leakage -- the wire must follow them through. The only flexibility that the wire has is (1)~how to traverse the variable gadgets, and (2)~which clause-variable channels (not) to use. But we know, by clause gadget construction (see Fig.~\ref{clause}), that at least one channel from every clause must be used; moreover, when it is used, the variable must satisfy the clause (see Fig.~\ref{connections}). Thus, the $s\-f$ wire exists if and only if $I$ is satisfiable.
\begin{figure}\centering
\includegraphics{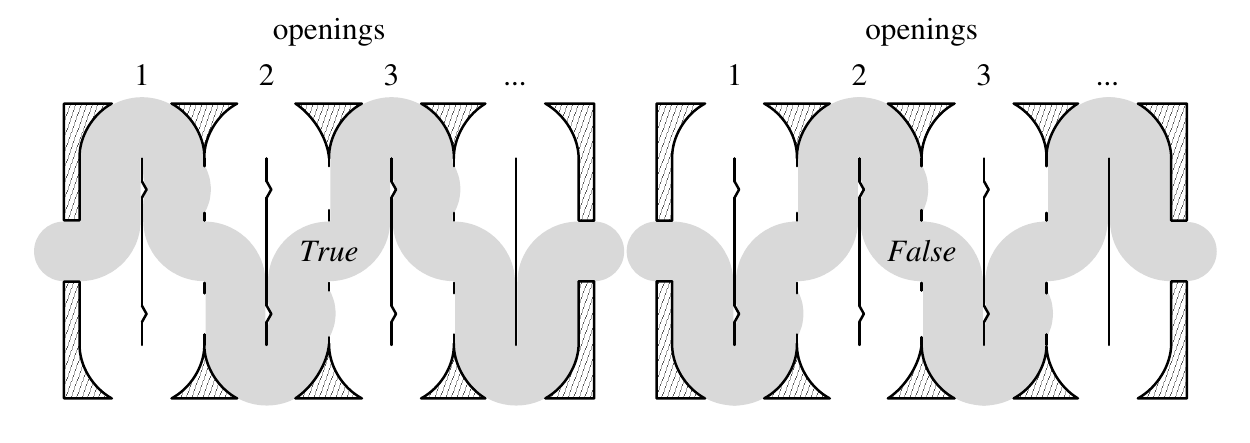}
\caption{A variable gadget can be traversed in one of the two ways, setting the truth assignment. Depending on the way, the path bulges out of the odd (when the variable is set to True) or even (when it is False) openings at the top of the gadget, and resp.\ even/odd openings at the bottom.}\label{var}
\end{figure}
\begin{figure}\centering
\includegraphics{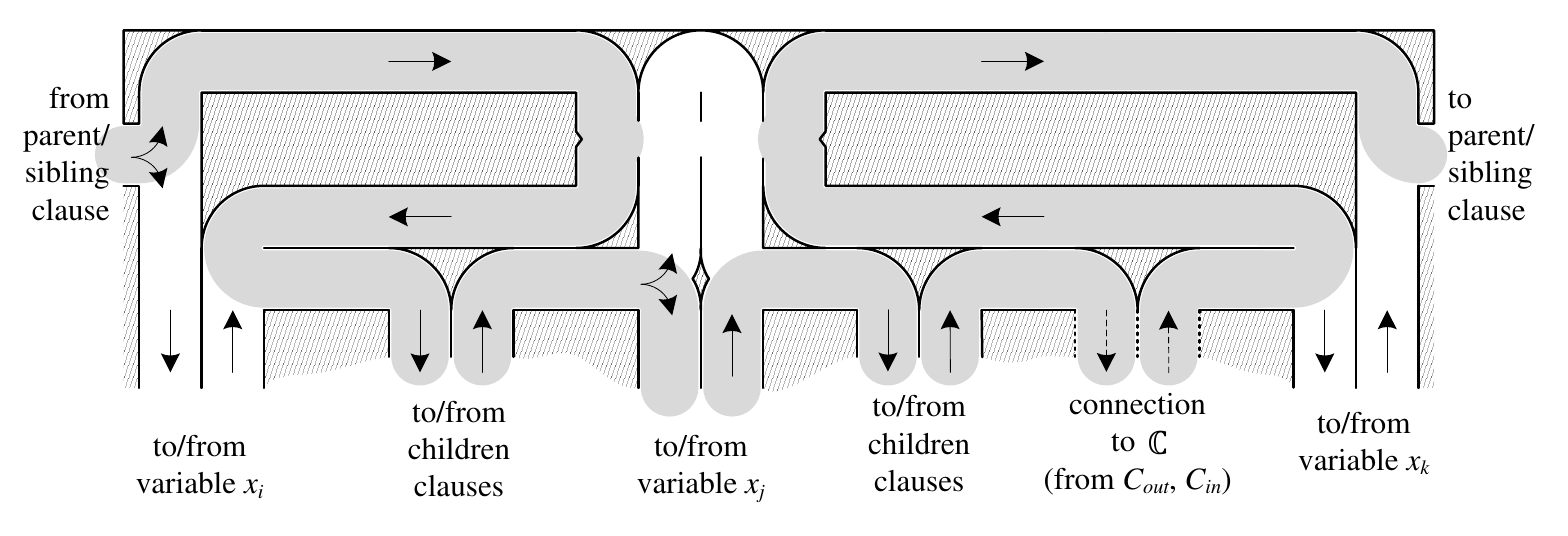}
\caption{When a clause gadget is traversed, from left to right, one of the channels leading to variables must be used. Otherwise, 3 subpaths go through the top of the gadget leading to a self-overlap.}\label{clause}
\end{figure}
\begin{figure}\centering
\includegraphics{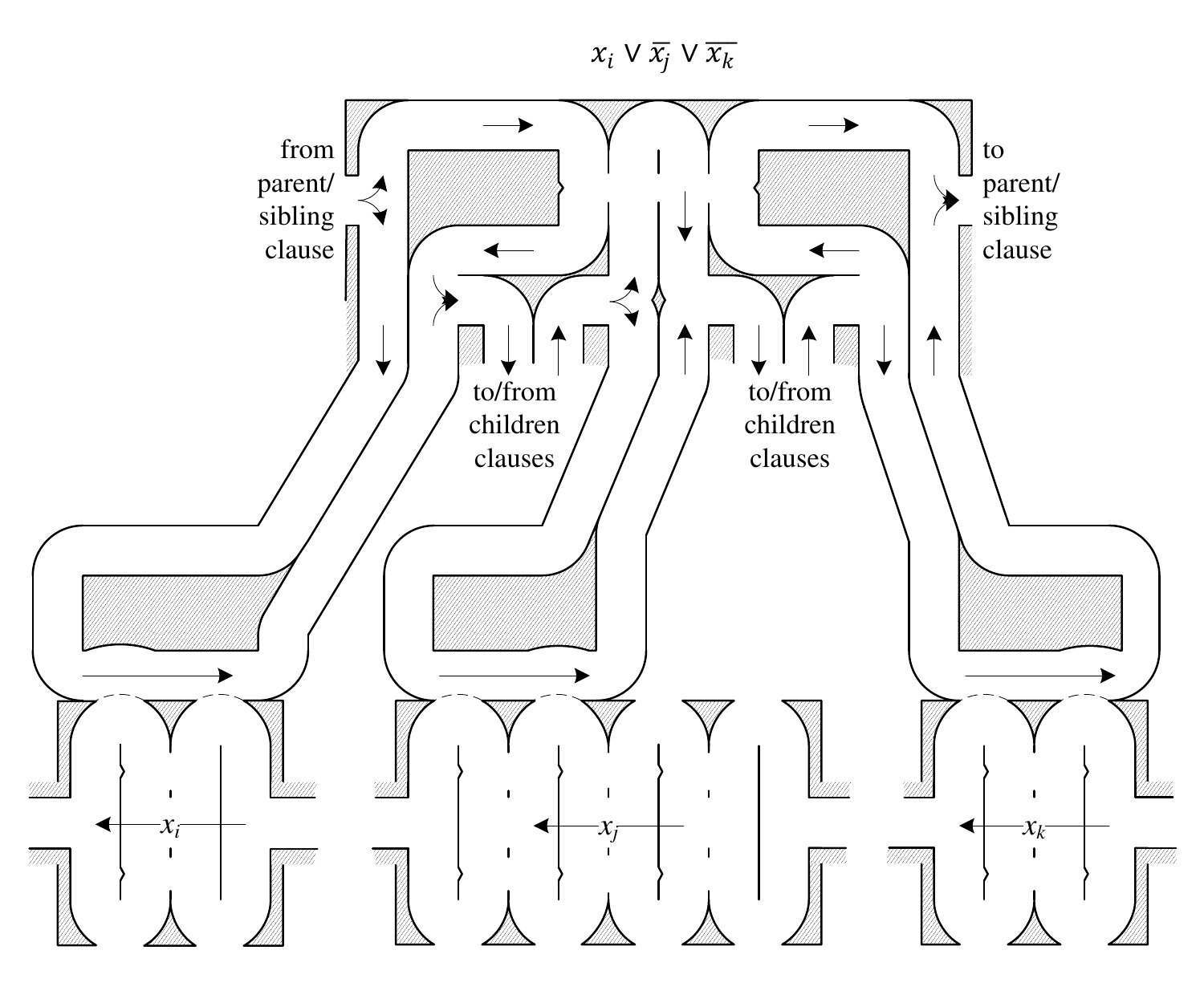}
\caption{If a variable does not satisfy a clause the channel between them cannot be used by non-selfoverlapping path.}\label{connections}
\end{figure}

\section{Conclusion}
We showed that for a snake to stay simple, it must grow fat (or else, learn to squeeze).
We mention few open problems here:
\begin{description}
\item To minimize snake's squeeze factor, it would be interesting to find a path that minimizes maximum self-overlap of a snake. It is easy to see that a snake of thickness 1/2 can follow (without self-overlap) the shortest (possibly self-overlapping) path for a thickness-1 snake. Is 1/2 best possible? Note that in our hardness proof, a non-zero amount of overlap is enforced; what is the largest self-overlap for which the problem remains hard?
\item Minimizing the path length for \e{one} point of the snake is similar to finding ``$d_1$-optimal'' motion of a rod \cite{rod}. Other objectives are possible: e.g., distance traveled by another point (not the mouth) or the average distance traveled ($d_2\-,\dots,d_\infty$-optimality, etc.). Does the situation with the snake mimic that for the rod: minimizing motion of any point is NP-hard, except for a rod endpoint?
\item What is the hardness of computing paths for a short snake that is not required to be pulled taut? It seems that the answer to this question leads to an interesting research direction of ``snake packing'': Given a polygonal domain, can one layout a length-\l snake in it? The problem is NP-hard by a reduction from Hamiltonicity of grid graphs; what about, say, simple polygons?
\item Can a shortest \e{rectilinear} wire be computed efficiently?
\end{description}
\paragraph{Acknowledgements} We thank Estie Arkin, David Kirkpatrick, Joe Mitchell and Jukka Suomela for discussions. This work was partially supported by the Academy of Finland grant 118653 (ALGODAN).

\end{document}